\theoremstyle{plain}
\newtheorem{theorem}{Theorem}
\newtheorem{corollary}[theorem]{Corollary}
\newtheorem{proposition}[theorem]{Proposition}
\theoremstyle{definition}
\theoremstyle{remark}
\newtheorem{remark}[theorem]{Remark}
\numberwithin{equation}{section}
\numberwithin{theorem}{section}
\newcommand{\G}{\mathbb G}
\newcommand{\1}{\mathbf 1}
\newcommand{\g}{\mathbf{g}}
\begin{document}

\title{A test of hypotheses for random graph distributions built from EEG
data}

\author{Andressa Cerqueira, Daniel Fraiman, \\Claudia D. Vargas and Florencia Leonardi}

\date{\today}

\maketitle

\begin{abstract}
The theory of random graphs is being applied in recent years to model neural interactions in the brain. While the probabilistic properties of random graphs has been extensively studied in the literature, the development of statistical inference methods for this class of objects
has received less attention.
In this  work we propose a non-parametric test of hypotheses to test
if two samples of random graphs were originated from the same probability distribution. We show
how to compute efficiently the test statistic and we study its
performance on simulated
data.
We apply the test to compare graphs of brain functional network interactions built from eletroencephalographic (EEG) data collected during the visualization of point light displays depicting human locomotion.

\end{abstract}

\section{Introduction}

The brain consists in a complex network of interconnected regions whose functional interplay is thought to play a major role in cognitive processes \cite{aertsen,friston93,vander10}. Based on an elegant representation of nodes (vertices) and links (edges) between pairs of nodes where nodes usually represent anatomically defined brain regions while links represent functional or effective connectivity \cite{bullmore2009}, random graph theory is progressively allowing to explore properties of this sophisticated network  \cite{calmels,fraiman}.  Such  properties have been used so far to infer, for instance, about effects of brain lesion \cite{wang}, ageing \cite{achard,wu,meunier} and neuropsychiatric diseases (for a recent review, see \cite{bassett09}).

From a theoretical point of view, the most famous model of random graphs is the Erdös-Reny  model  \cite{erdos-renyi-1960} (introduced by Gilbert in \cite{gilbert1959}), where the edges of the graphs  are independent and identically distributed Bernoulli random variables.  Besides its simplicity, this model continues to be actively studied  and new properties are being discovered (see for example \cite{chatterjee2011} and references therein).  From the applied point of view the most popular model is the Exponential Random Graph Model (ERGM) that has emerged mainly in the Social Sciences community (see \cite{rinaldo2009} and references therein).

Notwithstanding the crescent interest of the scientific community in the graph theory applications, the development of statistical techniques to compare sets of graphs or network data is still quite limited. Some recent works have addressed the problem of maximum likelihood estimation in ERGM (\cite{rinaldo2009,snijders2010,chatterjee2013}),
 but the testing problem has been even less developed.
  As far as we know, the testing problem is restricted only to the identification of differences in some one dimensional graph property (\cite{fraiman,stam,ibanez}).
At this point it is important to remark that the number of different graphs with $v$ nodes
grows as fast as  $2^{v(v-1)/2}$  which in practice is far  much larger than a typical  sample size analyzed. This is the reason why the testing problem is difficult and relevant given that the graph space has no total order.

 In this paper we propose a goodness-of-fit  test of hypothesis for random graph distributions. The statistic is inspired in a recent work \cite{busch2009} where a test of hypothesis for random trees is developed. We show how to compute the test statistic efficiently  and we prove a Central Limit Theorem.
The test makes no assumption on the specific form of the distributions and it is consistent against any alternative hypothesis that differs form the sample distribution on at least one edge-marginal.  In a simulation study we show  the efficiency of the test and we compare its performance with the simultaneous testing of  the edge-marginals.
 We also apply the test to compare graphs built from  electroencephalographic (EEG) signals collected during the observation of videos depicting human locomotion.

\section{Definition of the test}

Let $V$ denote a finite set of vertices, with cardinal $|V|=v$, and let $\G(V)$ denote the set of all simple undirected graphs over $V$. We identify a graph $g=(V,E)$ with the indicator function $g_{ij}=\1\{(i,j)\in E\}$.
Given a graph $g\in\G(V)$, we denote by $\1-g$ the graph
defined by $(\1-g)_{ij} = 1-g_{ij}$.

 In order to measure a discrepancy between two graphs $g,g'\in\G(V)$ we introduce a distance given by
 \[
 D(g,g') = \sum_{ij}    (g_{ij} - g'_{ij})^2 \,.
  \]
Here and throughout the rest of the paper summations will refer to the set of vertices $(i,j)\in V^2$ such that $i<j$ (because $g_{ij}=g_{ji}$).

Given a set of graphs $\mathbf{g} = (g^1,\dotsc, g^n)$ and a graph $g\in \G(V)$, we denote by
$\bar D_{\mathbf{g}}(g)$ the mean distance of graph $g$ to the set
$\mathbf{g}$; that is
\[
\bar D_{\mathbf{g}}(g) = \frac1n \sum_{k=1}^n D(g,g^k)\,.
\]
We also define the function $\overline{\g}\colon V^2\to[0,1]$, the \emph{mean} of $\g$, by
\[
\overline\g_{ij}\; =\; \frac1n\sum_{k=1}^n g^k_{ij}\,.
\]

Assume  $g$ is a random graph with  distribution $\pi$. Denote by
$\pi_{ij}=\pi(g_{ij}=1)$ and let $\Sigma$ denote the covariance matrix of
 $\pi$.
Given another probability distribution $\pi'$ defined on $\G(V)$, we are interested in testing the hypothesis
\begin{equation}\label{hip}
H_0\colon \pi=\pi'\quad\text{ versus }\quad H_A\colon \pi\neq\pi'\,.
\end{equation}
Given an i.i.d sample of  graphs $\mathbf{g} = (g^1,\dotsc, g^n)$ with distribution $\pi$, we define the one-sample test statistic
\begin{equation}\label{onesample}
W(\mathbf{g}) = \max_{g\in \G(V)} \,| \bar D_{\mathbf{g}}(g) -  \pi' D(g,\cdot)\,|\,,
\end{equation}
where $\pi'D(g,\cdot)$ denotes the mean distance of graph $g$ to a random graph with distribution $\pi'$ and is given by
\[
\pi'D(g,\cdot) = \sum_{g'\in\G(V)} D(g,g')\pi'(g')\,.
\]

In the same way, given two samples $\mathbf{g} = (g^1,\dotsc, g^n)$ and  $\mathbf{g'} = (g'^1,\dotsc, g'^m)$
with distributions $\pi$ and $\pi'$ respectively,
we define the two-sample test statistic
\begin{equation}\label{twosample}
W(\mathbf{g},\mathbf{g'}) = \max_{g\in \G(V)} \,| \bar D_{\mathbf{g}}(g) - \bar D_{\mathbf{g'}}(g)\,|\,.
\end{equation}

At first sight the computation of (\ref{onesample}) or (\ref{twosample}) is prohibited for even a small number of vertices. But
as we show in the following proposition, it is possible
 to compute the test statistic in $O(v^2(n+m))$ time.

\begin{proposition}\label{main-teo}
For the one-sample test statistic we have that
\begin{equation}\label{wg}
W(\mathbf{g})  \;=\; \sum_{ij} \;\bigl|\, \overline\g_{ij}- \pi'_{ij} \,\bigr|\,.
\end{equation}
Analogously, for the two-sample test statistic we have that
\begin{equation}\label{wgg}
W(\mathbf{g},\mathbf{g'})  \;=\;  \sum_{ij} \;\bigl|\, \overline\g_{ij}-
\overline\g'_{ij}  \,\bigr|\,.
\end{equation}
\end{proposition}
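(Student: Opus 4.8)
The plan is to expand the key quantity $\bar D_{\mathbf g}(g)$ in terms of the edge-marginals $\overline\g_{ij}$ and then optimize over $g$ edge by edge. Since $g,g^k\in\{0,1\}^{V^2}$ we have $(g_{ij}-g^k_{ij})^2 = g_{ij} + g^k_{ij} - 2g_{ij}g^k_{ij}$, and averaging over $k$ gives
\begin{equation}\label{barD-expansion}
\bar D_{\mathbf g}(g) \;=\; \sum_{ij}\bigl(g_{ij} + \overline\g_{ij} - 2 g_{ij}\overline\g_{ij}\bigr)
\;=\; \sum_{ij}\bigl(\overline\g_{ij} + g_{ij}(1-2\overline\g_{ij})\bigr)\,.
\end{equation}
An entirely analogous computation, using $\pi'D(g,\cdot) = \sum_{g'}\sum_{ij}(g_{ij}+g'_{ij}-2g_{ij}g'_{ij})\pi'(g')$ and $\sum_{g'}g'_{ij}\pi'(g') = \pi'_{ij}$, yields $\pi'D(g,\cdot) = \sum_{ij}\bigl(\pi'_{ij} + g_{ij}(1-2\pi'_{ij})\bigr)$. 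Subtracting, the terms not involving $g_{ij}$ combine and the $g_{ij}$ terms combine, giving
\begin{equation}\label{diff-linear}
\bar D_{\mathbf g}(g) - \pi'D(g,\cdot) \;=\; \sum_{ij}\bigl(\overline\g_{ij}-\pi'_{ij}\bigr)\bigl(1 - 2g_{ij}\bigr)\,.
\end{equation}

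The point of \eqref{diff-linear} is that the expression is \emph{separable} over the pairs $ij$ and \emph{affine} in each $g_{ij}\in\{0,1\}$, with the coefficient $1-2g_{ij}$ taking the value $+1$ when $g_{ij}=0$ and $-1$ when $g_{ij}=1$. Therefore, to maximize the absolute value $\bigl|\sum_{ij}(\overline\g_{ij}-\pi'_{ij})(1-2g_{ij})\bigr|$ over $g\in\G(V)$, we are free to choose each $g_{ij}$ independently. For an upper bound, the triangle inequality gives $\bigl|\sum_{ij}(\overline\g_{ij}-\pi'_{ij})(1-2g_{ij})\bigr| \le \sum_{ij}|\overline\g_{ij}-\pi'_{ij}|$ for every $g$. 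For the matching lower bound we exhibit an optimal graph $g^\star$: set $g^\star_{ij}=0$ if $\overline\g_{ij}-\pi'_{ij}\ge 0$ and $g^\star_{ij}=1$ otherwise, so that each summand becomes $|\overline\g_{ij}-\pi'_{ij}|$ and the total equals $\sum_{ij}|\overline\g_{ij}-\pi'_{ij}|$; since $g^\star$ is a legitimate element of $\G(V)$, the maximum is attained, proving \eqref{wg}.

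For the two-sample statistic \eqref{wgg} the argument is identical: applying \eqref{barD-expansion} to both $\mathbf g$ and $\mathbf{g'}$ and subtracting, the $\overline\g_{ij}$-plus-$g_{ij}$ structure again collapses to $\bar D_{\mathbf g}(g)-\bar D_{\mathbf{g'}}(g) = \sum_{ij}(\overline\g_{ij}-\overline\g'_{ij})(1-2g_{ij})$, and the same edgewise optimization over $g\in\G(V)$ gives the value $\sum_{ij}|\overline\g_{ij}-\overline\g'_{ij}|$. I do not anticipate a genuine obstacle here; the only thing to be careful about is the bookkeeping in \eqref{barD-expansion}–\eqref{diff-linear}, namely that the "constant" terms $\sum_{ij}\overline\g_{ij}$ and $\sum_{ij}\pi'_{ij}$ (resp.\ $\sum_{ij}\overline\g'_{ij}$) are exactly what is needed so that, after cancellation, every term carries the common factor $(1-2g_{ij})$. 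Once that is checked, separability does all the work and no convexity or combinatorial subtlety about $\G(V)$ is needed beyond the trivial fact that all $2^{v(v-1)/2}$ assignments of the $g_{ij}$ are available.
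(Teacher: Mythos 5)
Your proposal is correct and follows essentially the same route as the paper: expand the squared differences using $x^2=x$ for $x\in\{0,1\}$, reduce the objective to the separable affine form $\sum_{ij}(\overline\g_{ij}-\pi'_{ij})(1-2g_{ij})$, and optimize edge by edge (the paper phrases the edgewise optimization as maximizing $w_\g$ and $-w_\g$ separately rather than via the triangle inequality, but this is the same argument). No gaps.
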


As a corollary of this proposition we prove the following result about the asymptotic distribution of the test statistic. Let $\Pi=(\pi_{ij})$,  $\hat\Pi=(\overline\g_{ij})$ and
$\hat\Pi'=(\overline\g'_{ij})$. Then we can write
$W(\mathbf{g})=\|\hat\Pi -\Pi\|$ and $W(\mathbf{g},\mathbf{g'})=\|\hat\Pi -\hat\Pi'\|$, where $\|\cdot\|$ denotes the
vectorized 1-norm.

\begin{corollary}\label{main-cor}
Under $H_0$, for the one-sample test statistic we have that
\begin{equation*}
\sqrt{n}\bigl(\hat\Pi-\Pi\bigr) \xrightarrow[n \rightarrow \infty]{D} N(0,\Sigma)\,.
\end{equation*}
Analogously, for the two-sample test statistic we have that
\begin{equation*}
\sqrt{\frac{nm}{n+m}}\bigl(\hat\Pi-\hat\Pi'\bigr) \xrightarrow[n,m \rightarrow \infty]{D} N(0,\Sigma)\,.
\end{equation*}
\end{corollary}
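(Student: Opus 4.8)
The plan is to read both statements as instances of the multivariate Central Limit Theorem for sums of i.i.d.\ bounded random vectors, with Proposition \ref{main-teo} serving only to identify the quantities involved. Throughout, I would identify a graph $g^k$ with its vectorized form $X^k := (g^k_{ij})_{i<j}\in\{0,1\}^{\binom v2}$, so that $\hat\Pi=\frac1n\sum_{k=1}^n X^k$ and $\hat\Pi'=\frac1m\sum_{k=1}^m X'^k$ are empirical means.

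First, for the one-sample statistic, the $X^k$ are i.i.d.\ with $\E[X^k]=\Pi$ and covariance matrix $\Sigma$ by the very definition of $\Sigma$. The classical multivariate CLT --- or, equivalently, the Cram\'er--Wold device together with the one-dimensional Lindeberg--L\'evy theorem applied to $\langle t,X^k\rangle$ for each fixed $t\in\R^{\binom v2}$ --- gives $\sqrt n(\hat\Pi-\Pi)\xrightarrow{D}N(0,\Sigma)$; no non-degeneracy of $\Sigma$ is needed, the limit being a possibly degenerate Gaussian. Under $H_0$ one has $\Pi=(\pi'_{ij})$, so this is precisely the centered, rescaled version of $W(\mathbf{g})$ from \eqref{wg}, and the asymptotic law of $\sqrt n\,W(\mathbf{g})$ would then follow from the continuous mapping theorem applied to the vectorized $1$-norm.

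For the two-sample statistic, under $H_0$ the two samples are mutually independent and each is i.i.d.\ with mean $\Pi$ and covariance $\Sigma$. I would decompose
\[
\sqrt{\tfrac{nm}{n+m}}\,(\hat\Pi-\hat\Pi')\;=\;\sqrt{\tfrac{m}{n+m}}\,\sqrt n\,(\hat\Pi-\Pi)\;-\;\sqrt{\tfrac{n}{n+m}}\,\sqrt m\,(\hat\Pi'-\Pi)\,,
\]
apply the one-sample CLT to each of the two independent factors, and use independence to upgrade to joint convergence. Passing to a subsequence along which $m/(n+m)\to\lambda\in[0,1]$ and invoking Slutsky's lemma, the right-hand side converges to $\sqrt\lambda\,Z_1-\sqrt{1-\lambda}\,Z_2$ with $Z_1,Z_2$ independent $N(0,\Sigma)$, that is, to $N\bigl(0,\lambda\Sigma+(1-\lambda)\Sigma\bigr)=N(0,\Sigma)$.

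The only point requiring a little care --- and it is mild --- is that the asymptotic ratio of $n$ and $m$ need not be fixed: the subsequence argument goes through precisely because the mixing weights $\lambda$ and $1-\lambda$ sum to one, so the limit $N(0,\Sigma)$ does not depend on $\lambda$, and hence the convergence holds along the full sequence $n,m\to\infty$. Everything else is a routine application of standard CLT and Slutsky arguments.
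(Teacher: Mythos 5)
Your proposal is correct and takes essentially the same route as the paper, whose entire proof is a one-line appeal to the multidimensional Central Limit Theorem after identifying $\hat\Pi$ as an empirical mean of i.i.d.\ bounded vectors in $\{0,1\}^{\binom{v}{2}}$. The only difference is that you spell out the two-sample case --- the decomposition into weighted independent one-sample terms, the subsequence argument for $m/(n+m)$, and the observation that the limit $N(0,\Sigma)$ is independent of the limiting ratio $\lambda$ --- details the paper leaves implicit; these are all standard and correctly executed.
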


The proofs of these results are  postponed to the Appendix.
 
Assuming the distribution of $W$ under the null  hypothesis is  known,
the result of the test \eqref{hip} at the significance level $\alpha$
is
\[
\text{Reject }H_0\quad\text{ if }\quad W(\mathbf{g}) > q_{1-\alpha}\,,
\]
where $q_{1-\alpha}$ is the $(1-\alpha)$-quantile of the distribution of $W$
under the null hypothesis. The result of the test for the two-sample case is obtained in the same way replacing  $W(\mathbf{g})$ by $W(\mathbf{g},\mathbf{g'})$.

\begin{remark}By the form of the resulting test statistic, given in \eqref{wg} and \eqref{wgg},  and Corollary~\ref{main-cor} we can deduce that the test is consistent against any alternative hypothesis $\pi'$ with $\pi'_{ij}\neq\pi_{ij}$ for at least one pair $ij$.
\end{remark}

\section{Performance of the test on simulated data}

In this section we present the results of a simulation study in order to  evaluate the performance of the test \eqref{hip}.
In the first simulation example we compute the power function of the one-sample test of a (modified) Erdös-Renyi model of parameter $p\in(0,1)$ with $v=10$ vertices, taking as null hypothesis the classical Erdös-Renyi model with $p_0=0.5$. In the modified model, a percentage $q$ of the edges of the graph (previously chosen) are independent Bernoulli variables with parameter $p$, and the remaining edges are taken with parameter $p_0$ as in the null model. The power function of the test  \eqref{hip}  as a function of $p$ and for different values of $q$
is presented in Fig.~\ref{fig:1a}. The sample size was $n=20$ and the quantile of the distribution of $W$ (under $H_0$) was computed as the empirical $0.95$ quantile of a simulation with $p_0=0.5$ and $10.000$ replications.
Even for a somehow small  proportion of $25\%$ of different edge probabilities and a small sample size the test performs well and the power function approaches 1 when the absolute difference $|p-p_0|$ grows.
In order to compare our  results with a classical method, we performed simultaneous hypothesis tests on the edge occurrences by using Bonferroni correction (BC). In this case exact critical regions were obtained from the Binomial distribution. For all values of $q$, the $W$ test performs better than the BC procedure, as shown in  Fig.~\ref{fig:1a}.

\begin{figure}[t!]
\center
\vspace*{-0.5cm}
\subfigure[centerlast][W test]{\includegraphics[scale=0.74]{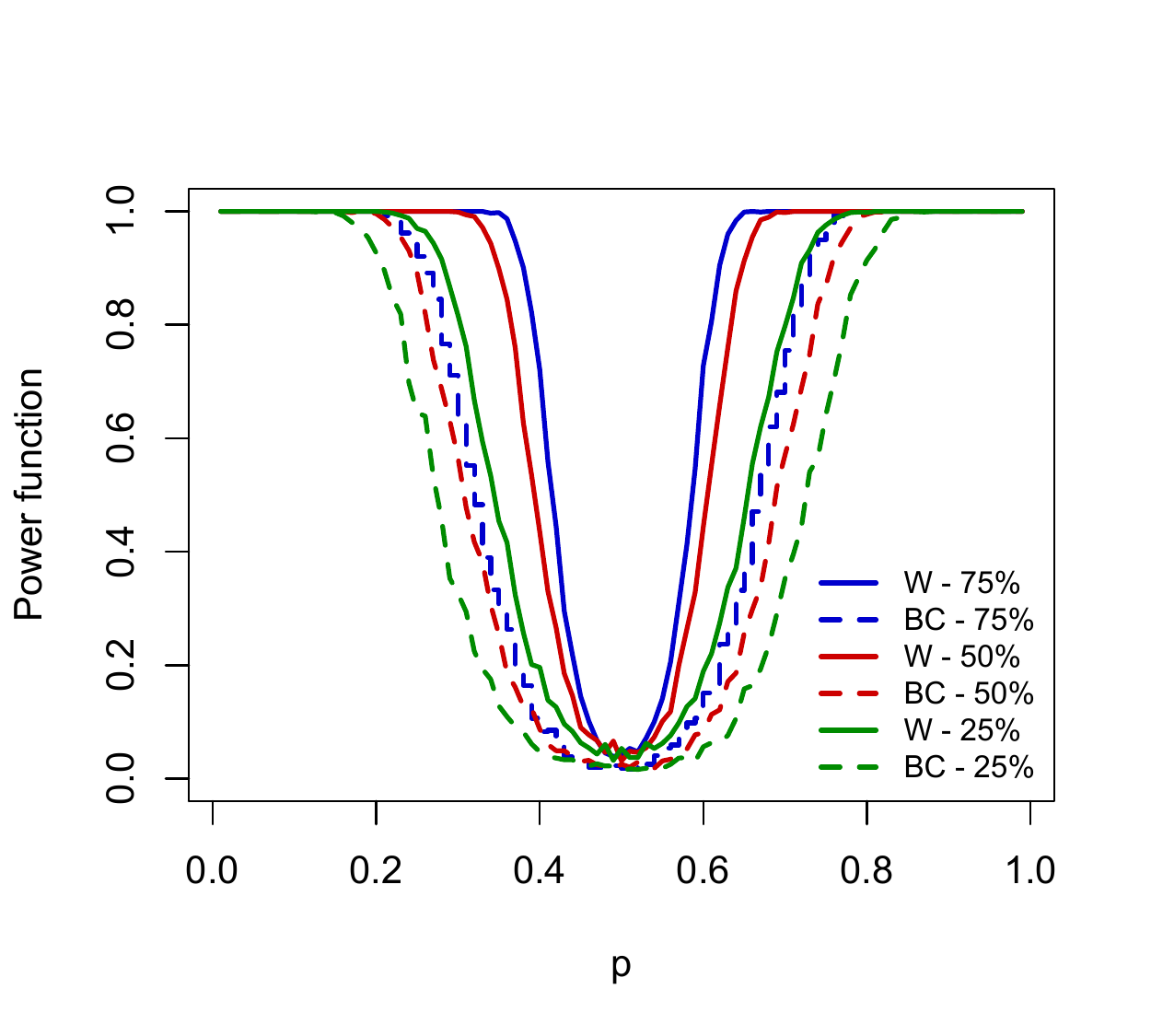}
\label{fig:1a}}
\subfigure[centerlast][W test]{\includegraphics[scale=0.74]{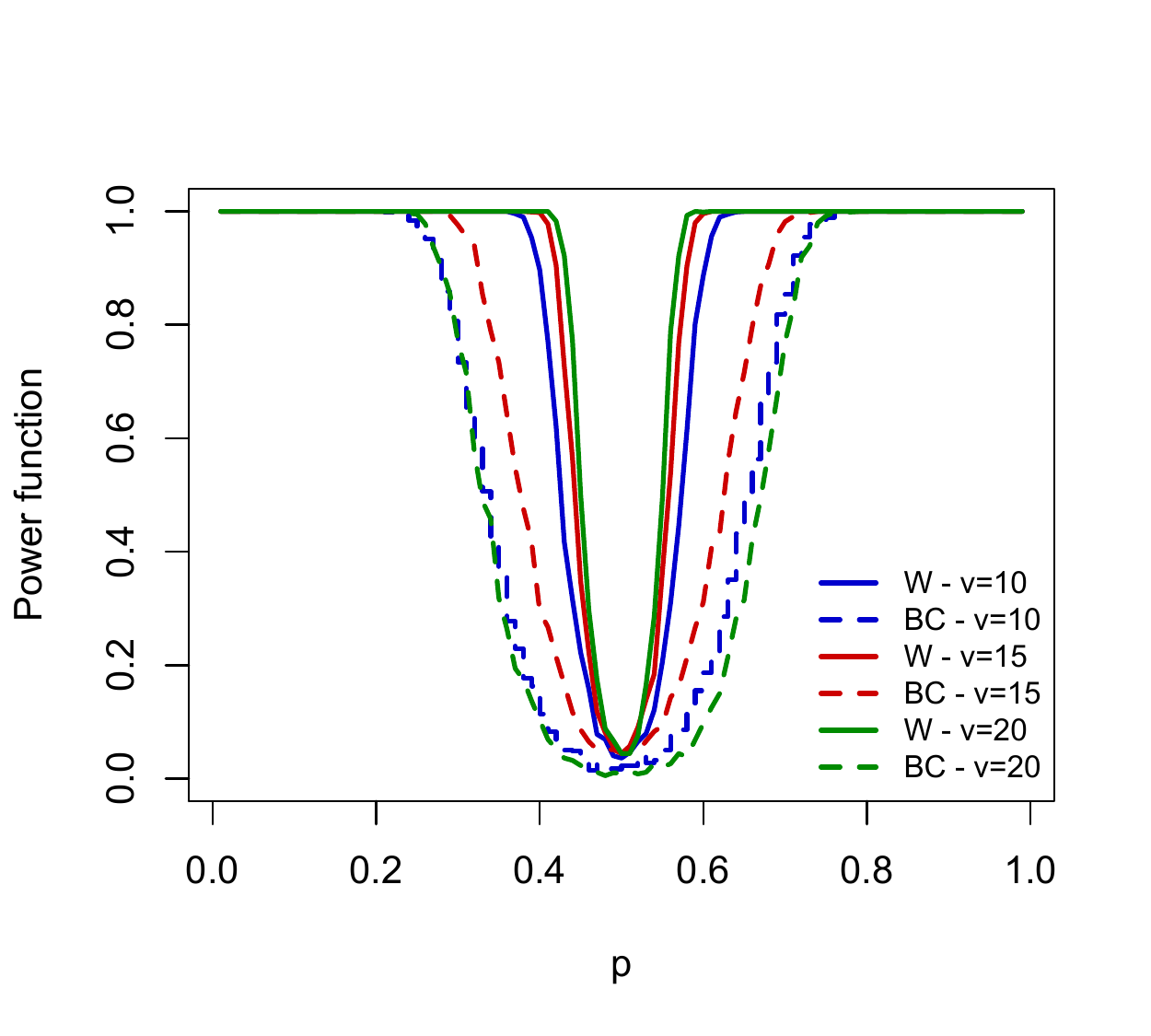}
\label{fig:1b}}
\caption{Comparison of the power functions of the one-sample $W$ test and the simultaneous testing procedure with Bonferroni correction (BC). The null model is Erdös-Renyi of parameter $p_0=0.5$ and the alternative hypothesis is (a) (modified) Erdös-Renyi  model  with $v=10$ nodes and $q\%$  of edges with parameter $p$ and the remaining edges with parameter $p_0=0.5$ and (b) Erdös-Renyi  model  with parameter $p$ and a different number $v$ of nodes in the graphs. }
\label{fig:erdosfracs}
\end{figure}

Alternatively, in Fig.~\ref{fig:1b} we show a comparison of the power function of a pure Erdös-Renyi model ($q=100\%$) but with different graph sizes (given by the number of nodes $v$).
As in the first example  we took as null model the  Erdös-Renyi model of parameter $p_0=0.5$.
 In all the simulations the sample size was $n=20$.   In this case, as it can be expected the power function is closer to 1 as the number of vertices increases because there is  more evidence against the null hypothesis. We emphasize the good performance of the $W$ statistic taking into account the small sample size and the number of possible graphs (that in the case of $v=10$ is $|\G(V)| = 2^{55}$ graphs).
In the same figure we show the comparison with the power functions for the BC procedure.
For all values of $v$ considered the $W$ statistic outperforms the simultaneous tests with Bonferroni correction.   Moreover, the test proposed here gains power as the number of vertices grows meanwhile the BC procedure decreases its power.

Finally, we focus our attention in the power function for the Exponential Random Graph
Model (ERGM).
In this model, the probability of a graph  $g$ is given by
\begin{equation}\label{prob}
\pi(g|\theta)=\frac{\exp(\theta\cdot S(g))}{z(\theta)}\,,
\end{equation}
where $\theta=(\theta_1,\dots,\theta_k)$ is the parameter vector,
$S(g)$ is a vector of $k$ statistics computed from $g$ (e.g. the number of edges, degree statistics, triangles, etc.) and $z(\theta)$ is the normalizing constant. Depending on the particular $S(g)$ function and parameter vector  $\theta$, the model favors graphs with distinct small structures. For example, the so called edge-triangle model, where $S(g)= (n_e(g),n_t(g))$ is the number of edges and the number of triangles present in $g$, penalizes (when $\theta_2<0$) or favors (when $\theta_2>0$) the appearance of triangle structures on the graph. The statistic $S(g)$ can be simply computed from $g$ by the formulas  $n_e(g)=\sum_{ij}g_{ij}$ and $n_t(g)=\sum_{ijk}g_{ij}g_{jk}g_{ik}$. In the model with $\theta_1\geq0$  and $\theta_2>0$ ($\theta_1\leq0$, $\theta_2<0$) it happens that the graph with highest probability (eq.~\ref{prob}) is the complete (null) graph where all edges are present (absent).
When $\theta_1<0$ and $\theta_2>0$, the model favors the appearance of triangles in the graphs but penalizing graphs with too many edges.
This model is very sensitive to the values of the parameter vector and is not uniquely determined, meaning that different values of the parameter vector can give rise to the same probability distribution. Even more, for $\theta_1\in \mathbb R$ and $\theta_2>0$, when the number of nodes increases, the ERGM model  is closed  in distribution to  a Erdös-Renyi model  (cf. \cite{chatterjee2013} and references therein).
Another well studied model is the edge-2star model, defined by $S(g)=(n_e(g),n_s(g))$, with $n_s(g)=\sum_{ijk}g_{ij}g_{jk}$, which generates small graphs with nodes of degree 2 or more (less) for $\theta_2>0$ ($\theta_2<0$). In this last model there is no ``incentive'' for the two nodes that join one of degree 2 to be linked creating a triangle structure.

\begin{figure}[t!]
\center
\subfigure[centerlast][Edge-triange model]{\includegraphics[scale=0.75]{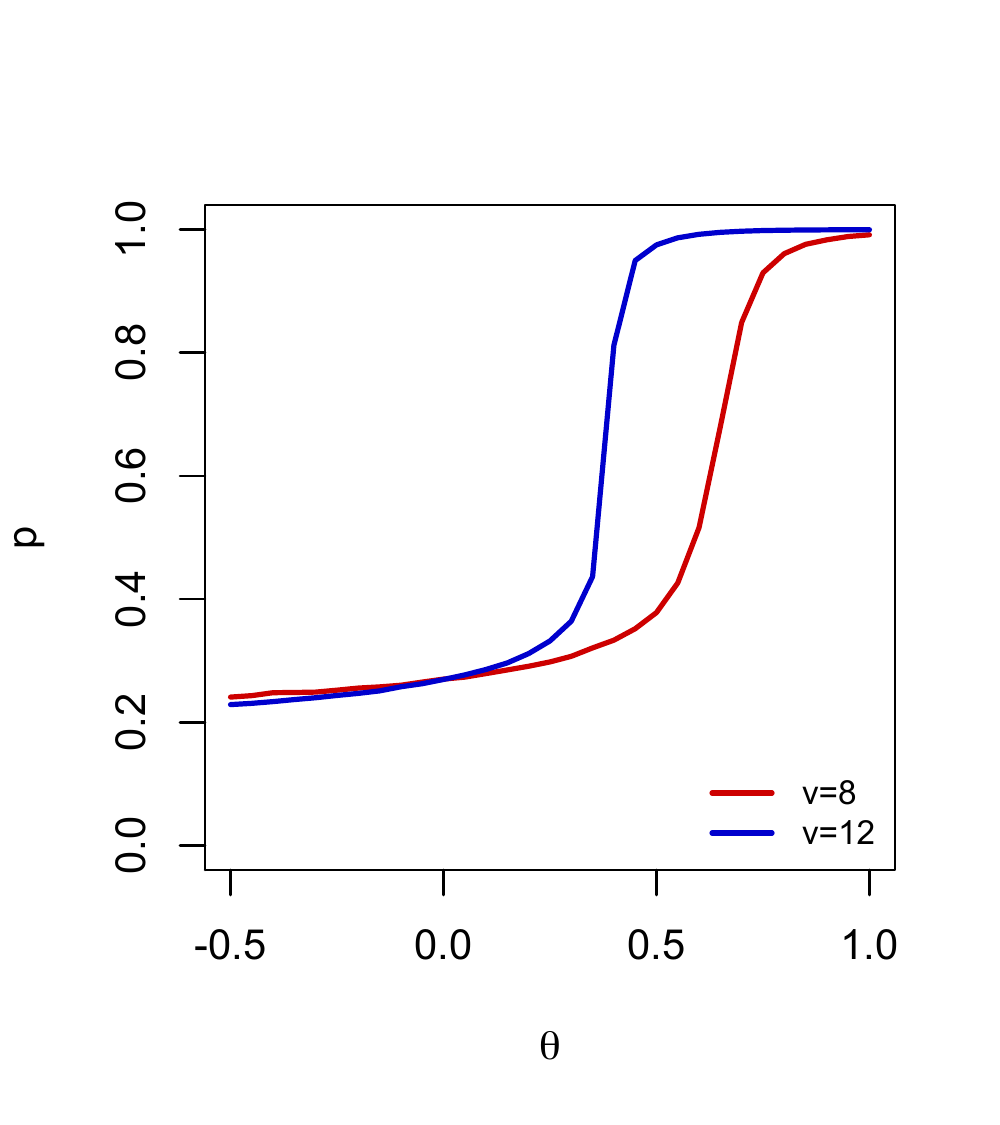}
\includegraphics[scale=0.72]{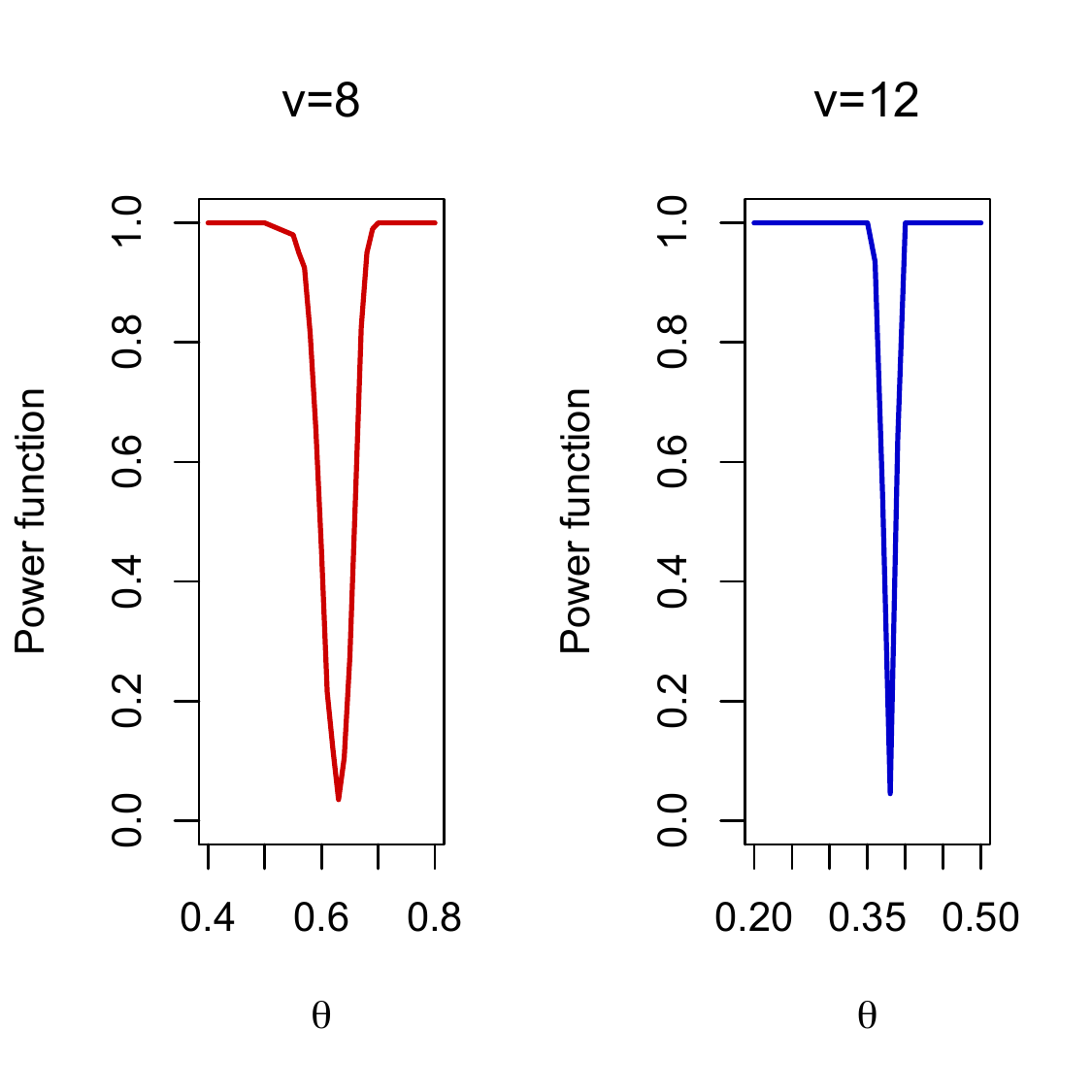}
\label{fig:theta-p}}
\subfigure[centerlast][Edge-2star model]{\includegraphics[scale=0.75]{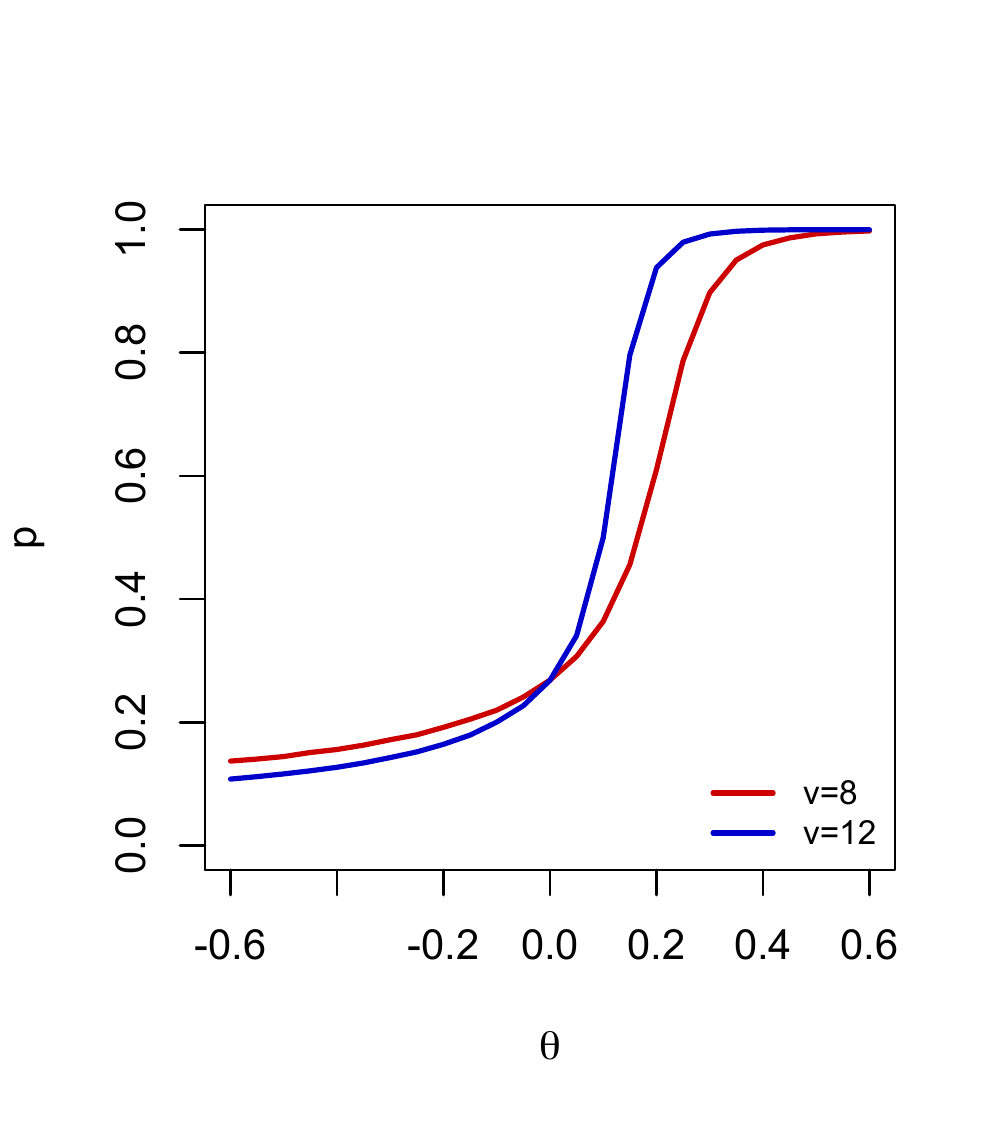}
\includegraphics[scale=0.72]{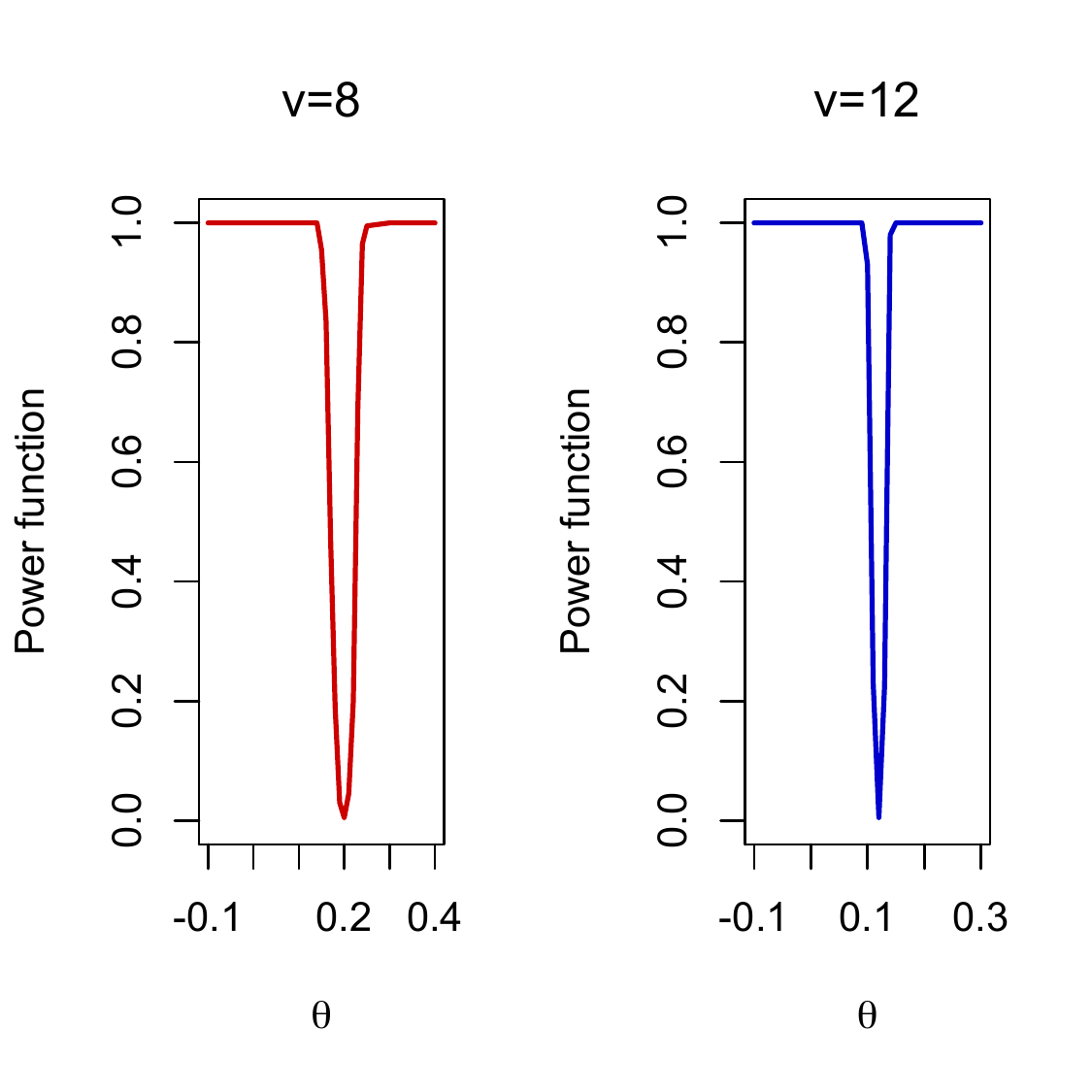}
\label{fig:theta-p-kstar}}
\caption{Power of the two sample test for comparing the edge-triangle model with $(\theta_1,\theta_2)=(-0.1,0.1)$ against (a)
the edge-triangle model with $(\theta_1,\theta_2)=(-0.1,\theta)$ and (b)
the edge-2-star model with parameter $(\theta_1,\theta_2)=(-0.1,\theta)$. }
\label{fig:ergm}
\end{figure}

In our first evaluation of the $W$ statistic for the ERGM  we consider  the edge-triangle model with parameter vector $(\theta_1,\theta_2)=(-1,\theta)$ and two different values of $v$. To understand the behavior of this model as a function of the parameter vector, we first compute the density of edges  $p$  for each value of $\theta$ ranging from $-0.5$ to $1$, the results are summarized in Fig.~\ref{fig:theta-p}. We can observe that the density of edges grows very fast in the interval $(0.5,0.8)$ for $v=8$ and $(0.3,0.5)$ for $v=12$ and as we will see, this fact is relevant for the  behavior of the power function.
To consider moderate null models (far from being the full graph or the null graph) we take for each value of $v$ the corresponding value of $\theta$ given a density of edges approximately equal to $p=0.6$, this corresponds to $\theta_2=0.63$ for $v=8$ and $\theta_2=0.38$ for $v=12$. For each one of these null models,  we computed the
$(1-\alpha)$-quantile of the distribution of $W$ under $H_0$ for the one sample test statistic \eqref{wg}, using 2000 replications of  $W$  with sample size $n=20$.
Then we computed the power function of the test against any hypothesis with $\theta_2=\theta$, with $\theta$ ranging from $-0.5$ to $1$.
We can see that the power functions grow very fast to 1 and this is a consequence of $\theta_2$ in the null model  being in the interval where $p$ grows very fast. The behavior of the power function is very anomalous if we take as null model a value of $\theta_2$ belonging to a flat region of $p$, because in these cases a big difference in $\theta$ does not imply a big difference in $p$, and this is determinant for the value of the power function.

In our second example we took the same null models for each one of the two cases $v=8$ and $v=12$ and computed the power function
for the  edge-2star  model with parameter vector $(\theta_1,\theta_2)=(-1,\theta)$, with
$\theta$ varying between $-0.6$ and $0.6$. As in our previous example, we computed the density of edges $p$ for each value of $\theta$ (Fig.~\ref{fig:theta-p-kstar}). In these cases the power function also converges very fast to 1 because the value $p=0.6$ of the null hypothesis lies in a region where the density of edges in the edge-2star model also grows very fast as a function of $\theta$.

\section{Discrimination of EEG brain networks}

\begin{figure}[t!]
\centerline{\includegraphics[scale=1.05]{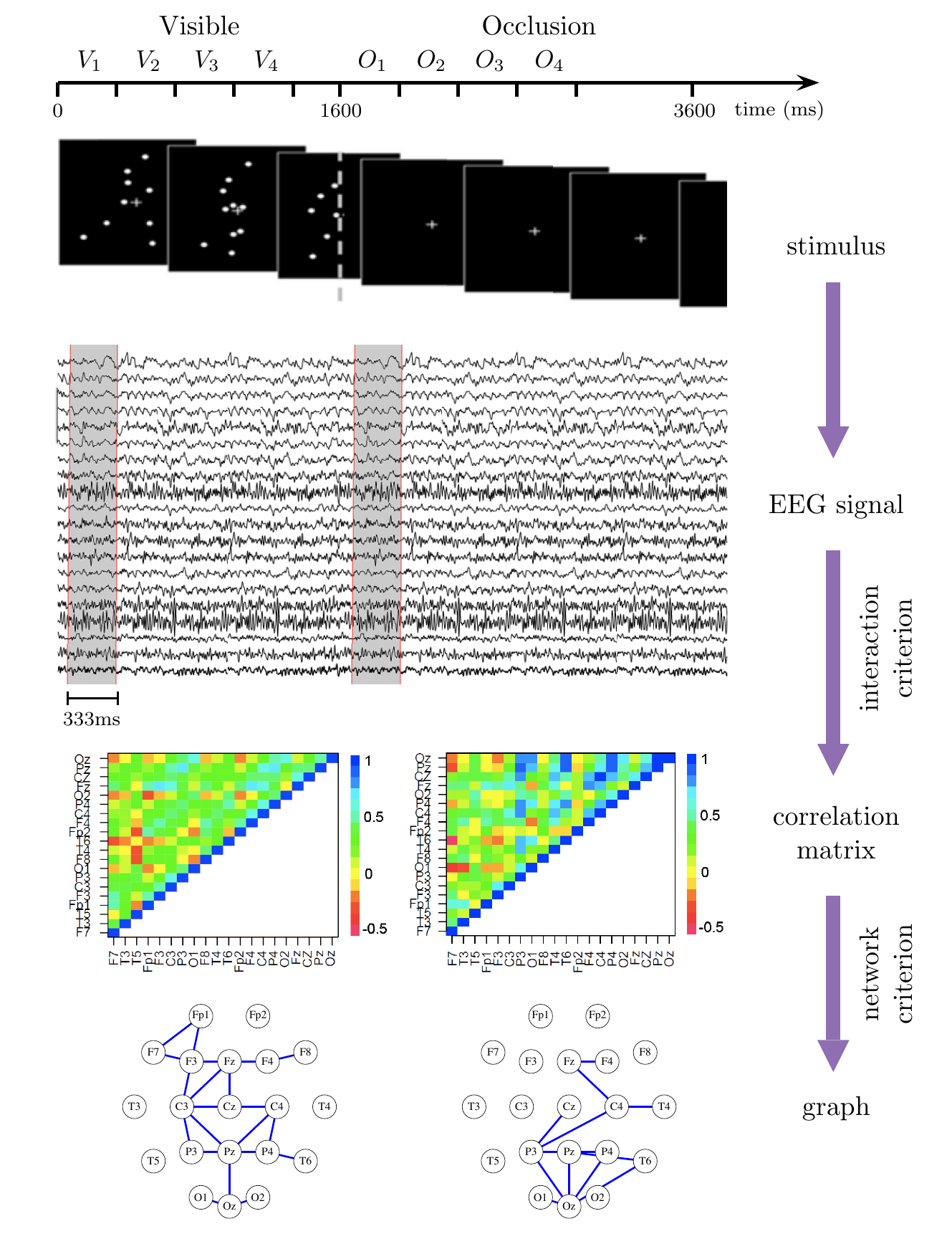}}
\caption{First two phases of the stimulus used in the experiment and the steps to obtain the samples of graphs from the EEG signal.
}
\label{fig:estimulo}
\end{figure}

The data analyzed in this section where first presented in \cite{ghislain}.
A total of sixteen healthy subjects (29.25$\pm$6.3 years) with normal or corrected to normal vision and with no known neurological abnormalities participated in this study. The study was conducted in accordance with the declaration of Helsinki (1964) and approved by the local ethics committee (Comit\'e de \'Etica em pesquisa do Hospital Universit\'ario Clementino Fraga Filho, Universidade Federal do Rio de Janeiro, 303.416).

The stimulus used in the experiment is composed by 10 white luminous points with black background, that represent 10 markers of the human body (head, shoulder, elbow, hand, hip, knee and ankle). The animation of these points permitted a vivid perception of a walker's movement (which we call biological movement).
The stimulus has a total length of 5200ms and is composed by 3 different phases: the visible phase (0 - 1600ms) represents the individual walking, the occlusion phase (1600ms - 3900ms) where the luminous points disappear behind a black wall and the phase of reappearance (3900ms - 5200ms) where the individual is again visible and continues walking.
A second stimulus employed in that study consisted on a permuted version of the point lights, thus destroying the gestalt of the human walker motion. This stimulus is called non-biological movement.
The results presented in this paper only consider the visible and the occlusion phases of the experiment (0 - 3900ms), a representation of the stimuli can be observed in the top of  Fig.~\ref{fig:estimulo}.

The EEG activity  was registered using a BrainNet BNT 36 (EMSA) that consists of twenty Ag/AgCl electrodes distributed in the scalp of the individual. To study the brain response to the stimulus, the animations were shown in two blocks with a five-minute inter-block interval. Each block consisted of 25 biological movement and 25  non-biological movement stimuli presented randomly. Each stimulus was displayed for 1.3 seconds, followed by an inter-stimulus interval (ISI) of 5 seconds. In each trial, a fixation cross appeared in the last second of the ISI. 
A total of 100 point light animations were displayed (2 blocks,  2 conditions [biological and non-biological movement], 25 repetitions).

To construct the brain functional networks, for each subject, phase and repetition of the experiment we first computed a Spearman correlation between each pair of electrodes for each temporal window  $[t, t+333ms]$, for values of $t$ varying every 16.66ms (this corresponds to the \emph{interaction criterion} in Fig.~\ref{fig:estimulo}).
The series of correlations for each pair of electrodes $ij$ (and specific for each subject, phase and repetition) will be denoted by
 $\{\rho_{t}^{ij}\colon t=t_1,\dots,t_n\}$.
 For the construction of the graphs we computed a threshold  for each pair of electrodes $ij$ based on this series of correlations and we put an edge between these electrodes if the  absolute value of the correlation  for a given time $t$ was  above this threshold (this step corresponds to the \emph{network criterion}  in Fig.~\ref{fig:estimulo}). That means to say that for each pair of electrodes we selected a different threshold value, and the selection of this threshold was
done in the following way. 
Let $c$ be a constant, $0<c<1$, and let
$q^{ij}_1$ and $q^{ij}_3$ denote the first and third quartiles of the series of correlations $\{\rho_{t}^{ij}\colon t=t_1,\dots,t_n\}$.
For a given time $t$ define
\begin{equation}
g_{ij}^{t}=\left \{
\begin{array}{cl}
1, & \mbox{if } \rho^{ij}_{t}\geq \mathrm{max}(c,q^{ij}_3) \mbox{ or }  \rho^{ij}_{t} \leq \mathrm{min}(-c,q^{ij}_1)\,;
\\
0, & c.c.
\end{array}
\right.
\end{equation}
In this way, the graph of interactions
for time $t$ will be given
by $g^t=(g^t_{ij})_{1 \leq i < j \leq 20}$.

The rationality of the criterion proposed here is that the graphs constructed in this way select the edges between electrodes that behaves  similarly from a statistical point of view, and this is done by imposing the first and third quartile condition. Each correlation between two electrodes fluctuates in time,  then for a given time $t$ we select the ones that are too small (less than $q^{ij}_1$) or large (greater than $q^{ij}_3$). 
 It is interpreted as follows, a given interaction grows if the two brain regions (principally responsible of the signal) are interacting in an excitatory way feeding back the process, or the interaction can decrease if there exist an inhibitory interaction between them.  Both changes are captured by our criterion. The extra condition greater (or less) to the value $c$ ($-c$) is just for obtaining statistical significant correlations. The value chosen for $c$ in this study is 0.5.

The samples of graphs constructed with our method consist
of 132 graphs for the visible phase and 142 for the occlusion phase of biological movement, for each temporal window (after deletion of spurious repetitions, and considering all subjects).  In the same way, we obtained 132 graphs for the visible phase and 137 for the occlusion phase of the non-biological movement, for each temporal window. 
To perform the tests we selected four non-overlapping windows on each phase, $V_1-V_4$ in the visible phase and $O_1-O_4$ in the occlusion phase (see the top of  Fig.~\ref{fig:estimulo}).

\begin{figure}[t!]
\begin{center}
\subfigure[center][$P$-value of visible vs. occlusion phases.]{
\begin{minipage}{\textwidth}
\centering
\begin{tabular}{|l|c|c|c|c|}
\hline \multirow{2}{*}{Visible vs Occlusion}&\multicolumn{4}{c|}{Windows} \\ \cline{2-5}
& $V_1$ vs. $O_1$ & $V_2$ vs. $O_2$ & $V_3$ vs. $O_3$ & $V_4$ vs. $O_4$ \\ \hline \hline
Biological&\textbf{0.0019}	&0.4294	&0.1984	&0.0278\\\hline\hline
Non-biological &\textbf{0.0016} &	0.8278	&0.1249	&0.6673	\\\hline\hline
\end{tabular}
\vspace*{7mm}
\end{minipage}
\label{tab:pvalor_phase}
}
\subfigure[center][Summary graphs for biological movement.]{\includegraphics[scale=0.6,trim=0 120 0 90]{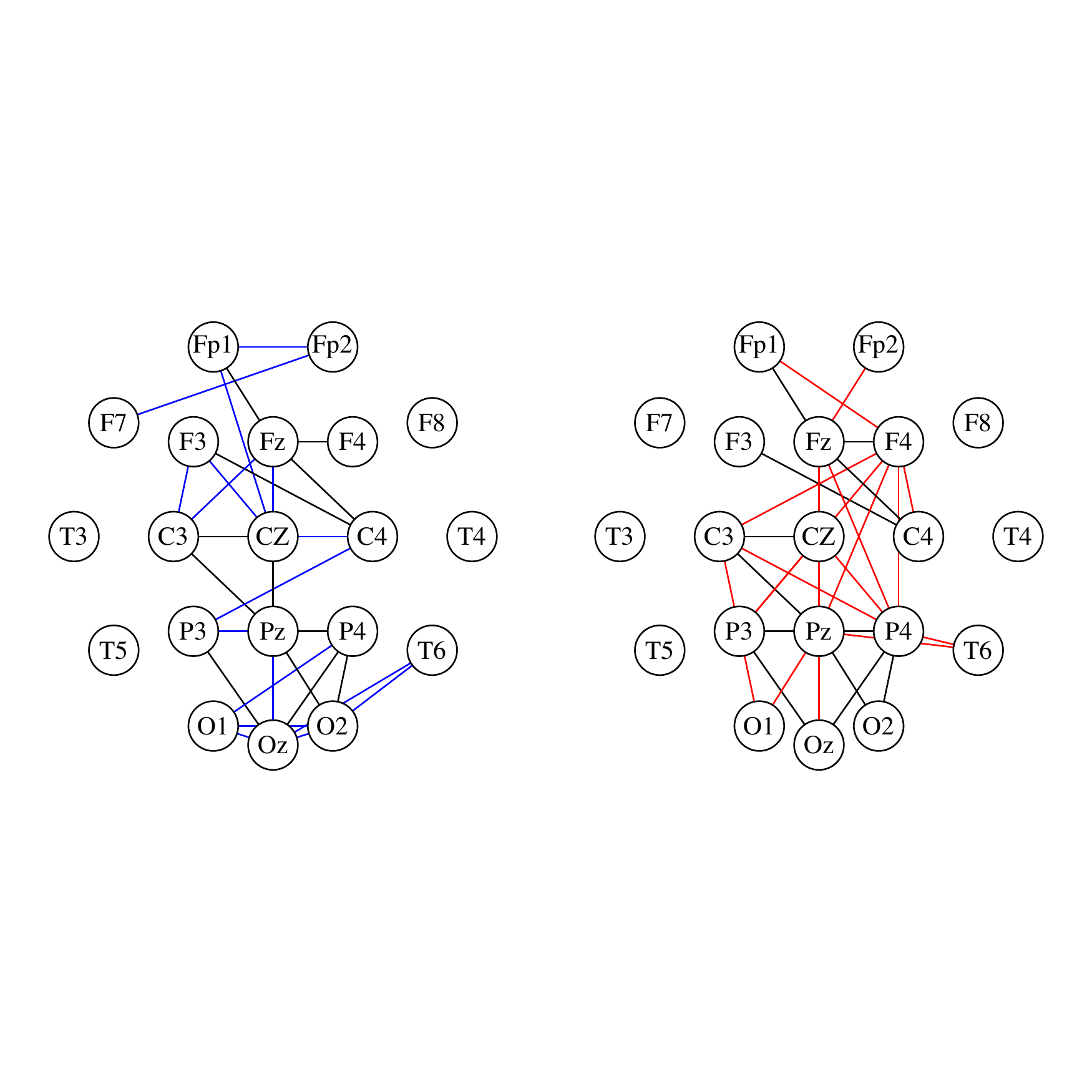}
\label{fig:viXoc_bio}}
\subfigure[center,nooneline][Summary graphs for non-biological movement.]{\includegraphics[scale=0.6,trim=0 120 0 90]{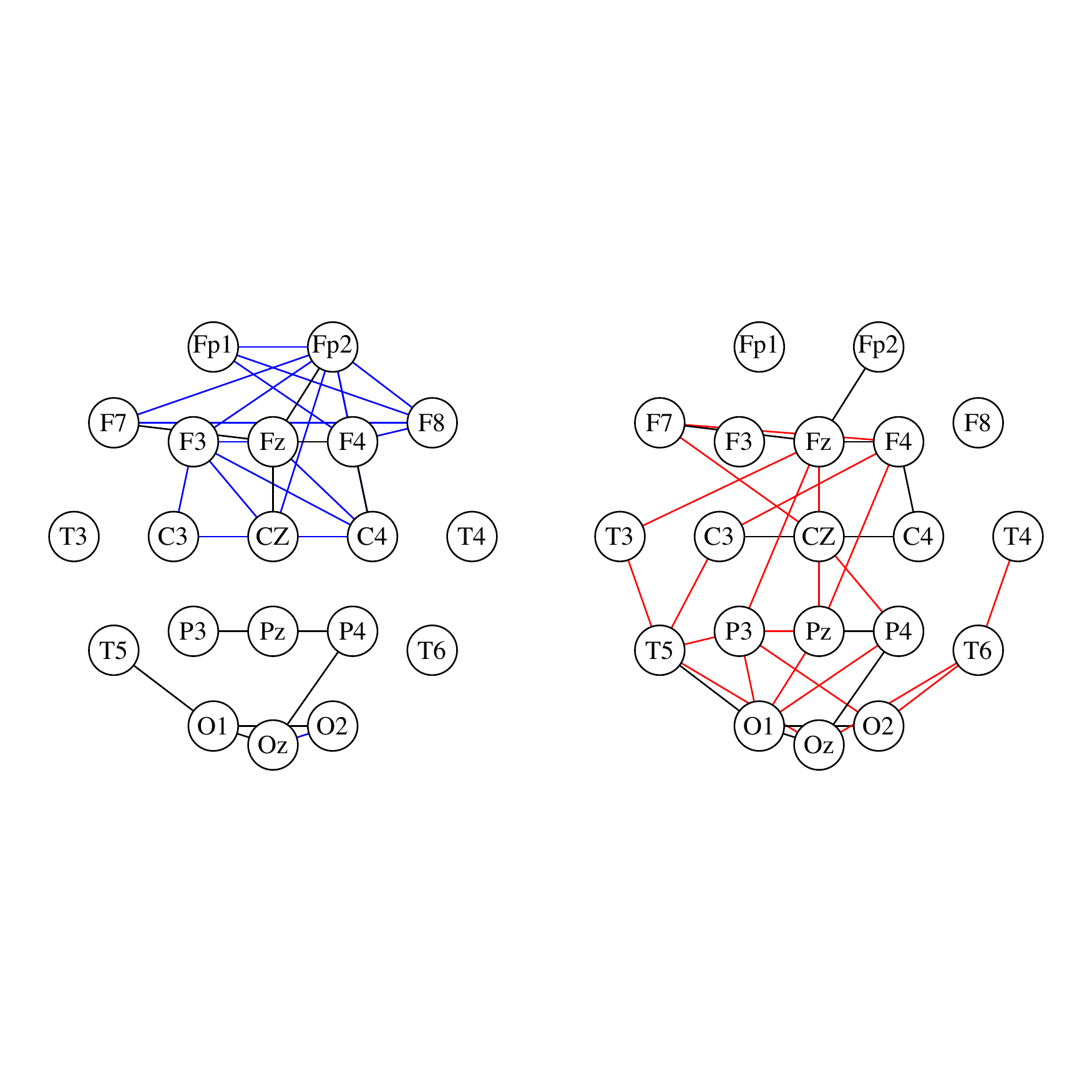}
\label{fig:viXoc_nbio}}
\end{center}
\caption{(a) $P$-value of the test of hypotheses for visible vs. occlusion windows of biological and non-biological movement. (b) Summary graphs of 30 more frequent edges for $V_1$ (left) and $O_1$ (right), for the biological movement. Black edges correspond to common edges. (c) Same as (b) for non-biological movement.}
\label{fig:graph_test.phase}
\end{figure}

We first tested the samples corresponding to visible vs. occlusion windows;
that is we tested $V_1$ vs. $O_1$, $V_2$ vs. $O_2$ and so on, for biological and non-biological movement. To compute the $p$-values, we used a permutation procedure  \cite{manly2007}; that is, for each pair of samples  we 
extract two random subsamples  from the pooled sample, with the same samples sizes of the original ones. Then we compute the test statistic for the subsamples extracted in this way, and we replicate this procedure 1000 times.  The estimated $p$-value  is therefore the empirical proportion of values in the vector of size 1000 built up in this way that are greater than the observed $W$ statistic.  
The $p$-values obtained for the four tests are reported in Fig.~\ref{tab:pvalor_phase}. 
We notice that  in both types of movement the $p$-values corresponding to the first windows of visible and occlusion phases are significantly smaller than the other $p$-values. 
The stimulus onset evokes an event related response   \cite{picton2000} in the first window of the visible phase. This response, also known as visual evoked potential,  is absent in the occlusion phase where there is no stimulus presentation. 
As can be observed the $W$ statistic is able to retrieve  this difference from the graphs distributions.

It is important to remark that the test of hypotheses proposed here does not discriminate which edges in the graphs contribute more significantly to distinguish the two conditions under analysis. Therefore, to compile the results obtained with the test of hypotheses we plotted a summary graph representing each sample by selecting the 30 more frequent edges. Fig.~\ref{fig:graph_test.phase}(b)-(c) illustrate the graphs corresponding to $V_1$ and $O_1$ in the biological and non-biological movement conditions for which the smallest $p$-values were found, as illustrated in Fig.~\ref{tab:pvalor_phase}.

Although the plots of the  30 most frequent edges in the first window of the visible and occlusion phases are quite similar for the biological movement condition, comparatively less edges seem present in the occipital electrodes (O1, Oz and O2) and there is a shift towards the right parietofrontal region in the occlusion period. These results could be taken as an evidence of the  hypothesis raised in \cite{ghislain} that the brain would implicitly ``reenact'' the observed biological movement during the occlusion period (see  for more details).  For the non-biological condition, the 30 most frequent edges in the first window of the visual phase clearly connect electrodes in the frontal region whereas the 30 most frequent edges in the first window of the occlusion phase connect electrodes in the central-occipital region.  

Comparing the biological and non biological conditions during the visible phase, Saunier et al. (2013) \cite{ghislain}  found  differences both  in the right temporo-parietal  and  in centro-frontal regions.  Using functional connectivity, Fraiman et al.  (2014) \cite{fraiman}  confirmed that the left frontal regions may play a major role when it comes to discriminating biological x non biological movements. To confirm these findings we proceeded to test the corresponding windows of the biological and non-biological movement conditions.
 For the visible phase the smallest $p$-value
($<0.03$) was obtained for the third temporal window (time between 668.3ms and 1001.7ms). The occlusion phase does not report significant results in any of the tested windows, see Fig.~\ref{tab:pvalor_nonbio}. We emphasize the fact that this is a more sensible problem compared to the comparison of visible and occlusion windows, in the sense that the differences in the stimuli are very subtle. For that reason it is not surprising that with the actual sample sizes we do not obtain very significant results in this case.

\begin{figure}[t!]
\center
\begin{tabular}{|l|c|c|c|c|}
\hline \multirow{2}{*}{Biological vs. non-biological}&\multicolumn{4}{c|}{Window} \\ \cline{2-5}
& 1 & 2&3&4\\ \hline \hline
Visible &0.1014	&0.6621	&{\bf 0.0295}	&0.5910\\\hline\hline
Occlusion &0.8227	&0.8816	&0.3764	&0.1292	\\\hline\hline
\end{tabular}
\caption{$p$-value of the test of hypotheses for biological vs non-biological movement of visible and occlusion phases.}
\label{tab:pvalor_nonbio}

\end{figure}

\section{Discussion}
In this paper we presented a goodness-of-fit non-parametric test inspired in the recent work \cite{busch2009} for probability distributions over graphs.  To our knowledge this is the first nonparametric goodness-of-fit  test of hypothesis for random graphs distributions.
We derived a closed and efficient formula for the test statistic that implies that the test is consistent against any alternative hypothesis having at least one different marginal distribution over the set of edges. In this case, the simulations show that our test outperforms the simultaneous testing of the marginal means with Bonferroni correction.
As in practice the sample sizes are very small compared with the sample space (in our simulations we took $n=20$ for the sample size versus $2^{28}$ of the sample space for graphs with 8 nodes), our test performs very well  even for small
  differences in the marginal distributions.
  In the real EEG dataset, we showed the potentiality of the $W$ statistic to detect differences in graphs of interaction built from EEG data.

 Although the main focus of this paper is on simple non-directed graphs, the generalization of the test statistic to other graph structures could be possible. This could be done for example  by taking a more general distance function between graphs or by modifying directly the test statistic formula given in Proposition~\ref{main-teo}.  This would generalize the  test to include  other graph structures or would enable the test statistic  to be consistent even for different graph distributions having the same marginals over the edges.

\section{Appendix}

\begin{proof}[Proof of Proposition~\ref{main-teo}]
We will prove the proposition only for the one-sample test. The result for the two-sample test can be derived analogously.
Denote by $w_\g(g)=\bar D_\g (g) -\pi'D(g,\cdot)$. Observe that in order to maximize $|w_\g(g)|$ in   $\G(V)$ it is sufficient to maximize $w_\g(g)$ and $-w_\g(g)$. We have that
 \begin{eqnarray*}
w_\g(g)&=& \dfrac{1}{n}\sum_{k=1}^n D(g,g^{k}) - \sum\limits_{g' \in \G(V)}D(g,g')\pi'(g')\,. \nonumber\\
\end{eqnarray*}
The first sum equals
\begin{eqnarray*}
\frac{1}{n}\sum_{k=1}^n D(g,g^{k})
&=& \frac{1}{n}\sum_{k=1}^n \; \sum_{ij}(g_{ij} - g^k_{ij})^2 \\
&=&  \sum_{ij} ( g_{ij} - 2g_{ij}\overline\g_{ij} + \overline\g_{ij} )\,.
\end{eqnarray*}
The second sum is
\begin{eqnarray*}
 \sum\limits_{g' \in \G(V)}D(g,g')\pi'(g')
 &=& \sum\limits_{g' \in \G(V)}\pi'(g') \sum_{ij}(g_{ij} - g'_{ij})^2  \\
&=&  \sum_{ij}  ( g_{ij} - 2g_{ij}\pi'_{ij} + \pi'_{ij} )\,.
 \end{eqnarray*}
 Therefore we have that
 \begin{eqnarray}\label{swg}
 w_\g(g)&=& \sum_{ij}  (2g_{ij}-1)(\pi'_{ij} -\overline\g_{ij} )\,.
 \end{eqnarray}
As this is a weighted sum, the graph $g^*\in\G(V)$ that maximizes  $w_\g(g)$ is given by
\begin{equation}\label{gstar}
g^*_{ij}=
\begin{cases}
1, & \text{ if }\,\overline{\g}_{ij} \leq \pi'_{ij}
\\
0, & \text{c.c.}
\end{cases}
\end{equation}
Similarly, the graph $g^{**}\in\G(V)$ that maximizes  $-w_\g(g)$ is given by
\begin{equation}\label{gstar2}
g^{**}_{ij}=
\begin{cases}
1, & \text{ if }\,\overline{\g}_{ij} \geq \pi'_{ij}
\\
0, & \text{c.c.}
\end{cases}
\end{equation}
Note also that by a direct calculation from (\ref{swg}) and the definitions (\ref{gstar}) and (\ref{gstar2}) we have that $|w_\g(g)|=|-w_\g(g)|$.
Finally, from (\ref{wg})  and (\ref{gstar}) we obtain
\begin{equation*}
W(\g)\,=\,\max_{g\in\G(V)} |w_\g(g)| \,=\, w_{\g}(g^{*})\,=\,\sum_{ij}
|\overline{\g}_{ij} - \pi'_{ij}|\,.\qedhere
\end{equation*}
\end{proof}

\begin{proof}[Proof of Corollary~\ref{main-cor}]
This is a direct consequence of the multidimensional Central Limit Theorem (cf. Theorem~11.10 in   \cite{breiman1992}).
\end{proof}

\section*{Acknowledgments}
F.L.\/ is partially supported by a CNPq-Brazil fellowship (304836/2012-5) and  FAPESP's fellowship (2014/00947-0). This article was produced as part of the activities of FAPESP Research, Innovation and Dissemination Center for Neuromathematics, grant 2013/07699-0, S\~ao Paulo Research Foundation.
She also thanks L'Or\'eal Foundation for a ``Women in Science'' grant.

\bibliography{./references}

\end{document}